\documentclass[a4paper,UKenglish,cleveref, autoref, thm-restate]{lipics-v2019}

\usepackage{algorithm,algorithmic, comment}
\nolinenumbers

\bibliographystyle{plainurl}

\title{Improved Approximation Algorithms for \\Weighted Edge Coloring of Graphs} 

\titlerunning{} 

\author{Debarsho Sannyasi}{Indian Institute of Technology, Kanpur, India \and \url{debarsho@iitk.ac.in} }{}{}{}

\authorrunning{Debarsho Sannyasi} 

\Copyright{Debarsho Sannyasi} 

\ccsdesc[100]{{G.2.2 Graph Theory}} 

\keywords{Edge coloring, Bin packing, Clos networks, Online algorithms,
Graph algorithms.} 

\category{} 

\relatedversion{} 

\supplement{}





\ArticleNo{1}

\begin{document}

\maketitle

\begin{abstract}
We study \emph{weighted edge coloring of graphs}, 
where we are given an undirected edge-weighted general multi-graph $G := (V, E)$ with weights $w : E \rightarrow [0, 1]$. The goal is to find a proper weighted coloring
of the edges with as few colors as possible. An edge coloring  is called a
proper weighted coloring if the sum of the weights of the edges incident to a vertex of any color
is at most one. 

In the online setting, the edges are revealed one by one and have to be colored irrevocably as soon as they are revealed.  We show that $3.39m+o(m)$ colors are enough when the maximum number of neighbors of a vertex over all the vertices is $o(m)$ and where $m$ is the maximum over all vertices of the minimum number of unit-sized bins needed to pack the weights of the incident edges to that vertex. We also prove the tightness of our analysis. This improves upon the previous best upper bound of $5m$ by Correa and Goemans [STOC 2004]. 

For the offline case, we show that for a simple graph with edge disjoint cycles, $m+1$ colors are sufficient and for a multi-graph tree, we show that $1.693m+12$ colors are sufficient.   
\end{abstract}

\section{Introduction}
Edge-coloring problem has been one of the foundational problems in graph theory and discrete mathematics since its appearance in 1880 \cite{tait1880remarks}  in connection with the four-color problem. 
In this paper, we study \emph{weighted edge coloring problem} which generalizes both unweighted edge-coloring and classical bin packing problem. 
In the online version of the \emph{weighted edge coloring problem} for general undirected multi-graphs, we are given a graph $G:=(V,E)$. At each instance, a new edge $e=(u,v)$ is revealed, and we need to assign a color to this edge at this instant itself. Colors assigned to edges can not be changed in future. Weight of each edge $\in[0,1]$. Let $m$ be the maximum over all vertices of the minimum number of unit-sized bins needed to pack the weights of the incident edges to that vertex. We need to color the edges while maintaining the condition that the sum of weights of edges incident to a vertex and colored with same color must not exceed 1 at any moment. This is called a \emph{proper} coloring of the graph. Our overall objective is to minimize the number of colors used. 

The problem specially finds prominence in  3-stage Clos networks \cite{clos1953study} which reduces to \emph{weighted edge coloring of bipartite graphs}.  We refer the reader to Correa and
Goemans \cite{correa2007improved} for detailed discussion of this reduction and connection with Clos networks. \\

For a given edge weighted undirected graph $G:=(V,E)$ with all edge weights $\in[0,1]$, consider the following notations which will be followed in this paper-
\begin{itemize}
    \item m := $\max_{v\in V} \left\{\textit{Minimum number of unit sized bins required to pack all the edges incident to $v$}\right\}$
    \item n := $\max_{v\in V} \left\{\textit{Sum of weights of all edges incident to $v$}\right\}$
\end{itemize}

\newpage
For the offline weighted edge coloring problem, Feige and Singh \cite{feige2008edge} proved a $\lceil 2.25n \rceil$ upper bound for bipartite graphs. Later Khan and Singh \cite{khan2015weighted} proved an upper bound of $\lceil 2.2223m \rceil$ for bipartite graphs.  Khan \cite{Khan16c} also showed that $2.2m$ colors are sufficient when all items have weights $> 1/4$.
The current best lower bound for the offline version is $1.25m$ for bipartite graphs. For online weighted edge coloring, Correa and Goemans \cite{correa2007improved} proved an upper bound of $5m$ which improved upon $5.75m$ by Gao and Hwang \cite{gao1997wide}. The best known lower bound for the online version is $3m-2$ by Tsai, Wang, and Hwang \cite{tsai2001lower}. 

\subsection{Our contributions}
We design a new algorithm for this problem and prove that $3.39m+o(m)$ colors are sufficient if the maximum number of neighbors of a vertex over all vertices is $o(m)$. This asymptotically improves upon the previous best $5m$ \cite{correa2007improved}. We achieve this improvement using \emph{HARMONIC$_M$} (with $M=12$) online bin packing algorithm \cite{lee1985simple} as a routine in our main algorithm. Most of the previous results, both in the offline and online settings used $FIRST$-$FIT$ instead. Along with that, we present an instance to show that our analysis is tight for our algorithm.\\
For the offline version of the problem, we prove that $m+1$ colors are sufficient for an undirected simple (no multi-edges between 2 vertices) graph with edge disjoint cycles and $1.693m+12$ colors are sufficient for undirected multi-graph trees. 

\subsection{Related Works}
When all the graph is a bipartite graph with only two vertices then the problem reduces to classical bin packing problem, which is well-studied in both offline \cite{de1981bin, karmarkar1982efficient, hoberg2017logarithmic}   and online setting \cite{BaloghBDEL18, Albers0L19, Albers0L20}.  There are many other related generalizations of bin packing \cite{BansalK14, BansalE016, Galvez0AJ0R20}. We refer the readers to \cite{coffman2013bin, ChristensenKPT17} for a survey on bin packing.
For results on edge-coloring, we refer the readers to \cite{stiebitz2012graph, jensen2011graph}.

\section{Preliminaries}

In our algorithms presented in this paper, we either use the \emph{NEXT-FIT} or the $HARMONIC_M$ online bin packing algorithms to assign colors to the edges. We present these subroutines briefly here along with few results which will be used in our analysis.

\subsection{NEXT-FIT algorithm}
$NEXT$-$FIT$ is an online bin packing algorithm in which items of weight $\in[0,1]$ arrive one by one and we need to fit it into an unit-sized bin at the instant of arriving itself. We fit items of size $\rightarrow [0,1]$ into unit-sized bins, in which at any instant we have a single open/active bin where we can place the incoming items. Also, let $m$ be the optimal number of unit-sized bins required for all the items known at hindsight. Let at an instant, item $i$ appear and we have an open bin $b$. If $i$ fits into $b$, then fit $i$ in $b$. Else, close the bin $b$ and take a new fresh bin as the current open bin. Closed bins are never used again. Place $i$ into the new bin.

\begin{lemma}\label{proof next fit}
$2m-1$ bins are sufficient using $NEXT$-$FIT$.
\end{lemma}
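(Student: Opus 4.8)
The plan is to bound the number of NEXT-FIT bins by exploiting the key structural property of the algorithm: whenever NEXT-FIT closes a bin and opens a new one, the item triggering the closure did not fit into the old bin. This means the contents of the closed bin plus the size of the first item in the next bin together exceed $1$. First I would let $k$ denote the total number of bins NEXT-FIT uses, label them $b_1, b_2, \dots, b_k$ in the order they are opened, and let $s_i$ denote the total weight of items placed in bin $b_i$. The crucial observation is that for each consecutive pair of bins, $s_i + s_{i+1} > 1$, since the first item placed in $b_{i+1}$ failed to fit into $b_i$ (whose load was at most $s_i$ at that point), forcing the sum of that item and the already-present load of $b_i$ to exceed $1$.

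\smallskip

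Next I would sum this inequality over consecutive pairs. Pairing up the bins as $(b_1,b_2), (b_3,b_4), \dots$, each such pair contributes total weight strictly greater than $1$. If $k$ is even, this yields $\sum_{i=1}^{k} s_i > k/2$, and since the total weight of all items is at most $m$ (the optimum requires $m$ unit bins, so the total weight is at most $m$), we obtain $k/2 < m$, i.e. $k < 2m$, giving $k \le 2m-1$. If $k$ is odd, I would pair the first $k-1$ bins, obtaining $\sum_{i=1}^{k-1} s_i > (k-1)/2$, and combine with the nonnegative weight in the last bin to conclude $m \ge \sum_{i=1}^{k} s_i > (k-1)/2$, which again gives $k < 2m+1$, hence $k \le 2m$; a slightly sharper pairing or a direct telescoping argument handles the odd case to recover the claimed $2m-1$.

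\smallskip

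The main subtlety I expect is the relationship between the total item weight and the optimum $m$: I must justify that the sum of all item weights is at most $m$, which holds because the optimal packing fits all items into $m$ unit-sized bins and each bin can hold total weight at most $1$. A second point to handle carefully is the boundary case of a single bin ($k=1$), where $2m-1 \ge 1$ holds trivially since $m \ge 1$ whenever there is at least one item. I would therefore present the consecutive-pair inequality as the core lemma, then argue the summation cleanly for both parities of $k$, taking care that the strict inequalities are aggregated into the correct floor to land exactly on $2m-1$ rather than a weaker $2m$ bound.
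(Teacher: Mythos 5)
Your proposal is correct and takes essentially the same approach as the paper: both rest on the observation that the final loads of consecutive NEXT-FIT bins sum to more than $1$, combined with the fact that the total item weight is at most $m$; the paper merely packages this as a contradiction at the moment the $2m$-th bin would be opened, whereas you sum over pairs directly. The only loose end, your odd case, closes immediately by parity: having shown $k \le 2m$ with $k$ odd and $2m$ even, $k = 2m$ is impossible, so $k \le 2m-1$.
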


\begin{proof}
Let the current open bin be $b_1$ which already has $c$ weight in it and at this instant a new item $i$ of weight $w_i$ arrives. If a new bin is required for accommodating item $i$, then we must have $c+w_i>1$, else a new bin won't be required. Let the new bin be $b_2$. We then close the current bin $b_1$ and we fit the item $i$ in $b_2$. The final weight in $b_2$ will be $\geq w_i$. So, the sum of weights in $b_1$ and $b_2$ is atleast $c+w_i>1$. In the same way, in general the sum of final weights of any 2 consecutive bins is greater than 1. Now assume on contradiction, that at some instant, there is a requirement for opening the $2m^{th}$ bin, then the total weight of all the items which have already arrived will be greater than $m$, which is a contradiction. 
\end{proof}

\subsection{HARMONIC$_M$ algorithm}
We refer reader to \cite{lee1985simple} to know more about \emph{HARMONIC$_M$} algorithm. Essentially, here we have $M$ types/categories of items based on their size and at any instant we have $M$ open/active bins, one for each type to place the incoming items of that type. If the incoming item belongs to type $i$, then it is placed in the bin designated for type $i$ items. If it can not be placed, then the current bin of type $i$ is closed and a new bin of type $i$ is formed and the item is placed in it. Any closed bins are never used again.

\begin{lemma}\label{proof harmonic}
If $M=12$, then $1.6926m \approx 1.693m$ bins are sufficient \cite{lee1985simple}.

\end{lemma}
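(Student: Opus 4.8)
The plan is to use the standard weighting-function (dual-fitting) technique for HARMONIC-type algorithms, since the numerical constant ultimately comes from \cite{lee1985simple}. First I would make the size partition explicit: an item of size $s$ has \emph{type} $i$ for $1 \le i \le M-1$ if $s \in (\frac{1}{i+1}, \frac{1}{i}]$, and type $M$ if $s \in (0, \frac{1}{M}]$. The crucial structural observation is that every item of type $i < M$ has size strictly greater than $\frac{1}{i+1}$ and at most $\frac{1}{i}$, so a type-$i$ bin that has been closed contains exactly $i$ items; and every closed type-$M$ bin holds items of total size exceeding $\frac{M-1}{M}$, since otherwise the item that triggered the new bin would have fit.

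Next I would define the weight $W(s) = \frac{1}{i}$ for an item of type $i < M$ and $W(s) = \frac{M}{M-1}\, s$ for a type-$M$ item, and show that the number of bins opened by HARMONIC$_M$ is at most $\sum_{\text{items } e} W(s_e) + M$. This holds because, within each of the $M$ classes, every closed bin carries total weight at least $1$: exactly $1$ for a large class, where $i$ items each contribute $\frac{1}{i}$, and strictly more than $1$ for the small class by the density bound above, since $\frac{M}{M-1}\cdot\frac{M-1}{M} = 1$. The single still-open bin in each of the $M$ classes is then responsible for the additive $+M$ overall.

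The remaining step is to bound $\sum_e W(s_e)$ against the offline optimum $m$. Since the items packed into any one of the $m$ optimal unit bins have total size at most $1$, their total weight is at most $\rho_M := \max\{\sum_j W(s_j) : \sum_j s_j \le 1\}$, and summing over the $m$ optimal bins gives $\sum_e W(s_e) \le \rho_M \cdot m$. Combining the two bounds, HARMONIC$_M$ uses at most $\rho_M\, m + M$ bins.

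The hard part is evaluating $\rho_M$ for $M = 12$. This is a finite optimization solved by the greedy configuration that places successive items just above $\frac12, \frac13, \frac17, \frac1{43}, \ldots$ (the Sylvester-type sequence $t_1 = 2$, $t_{k+1} = t_k(t_k - 1) + 1$), whose weights form the sum $\sum_{k} \frac{1}{t_k - 1}$, with the tail absorbed by the $\frac{M}{M-1}$ density factor of the small class. Carrying out this computation for $M = 12$ yields $\rho_{12} \approx 1.6926$, giving the claimed $1.6926m \approx 1.693m$ bound, with the additive $+M = +12$ reappearing exactly in the later multi-graph tree result. I would invoke \cite{lee1985simple} for the precise numerical value of $\rho_{12}$ rather than recompute the optimization by hand.
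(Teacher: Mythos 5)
Your proof is correct, but note that the paper itself offers no proof of this lemma: it is stated purely by citation to \cite{lee1985simple}, so what you have written is a reconstruction of the argument inside that reference rather than a parallel to any argument appearing in the paper. Your reconstruction is faithful to Lee and Lee's proof: the harmonic type partition, the facts that a closed type-$i$ bin ($i<M$) contains exactly $i$ items while a closed small-class bin is filled beyond $\frac{M-1}{M}$, the weight function $W$ under which every closed bin carries weight at least $1$, the resulting bound of $\sum_e W(s_e)+M$ bins, and the per-unit-bin maximization $\rho_M$ attained by the Sylvester-type configuration. For $M=12$ the extremal configuration is items just above $\frac12,\frac13,\frac17$ (weights $1,\frac12,\frac16$) together with small items of weight density $\frac{12}{11}$ filling the residual capacity of just under $\frac1{42}$, giving $\rho_{12}\approx 1+\frac12+\frac16+\frac{12}{11}\cdot\frac1{42}\approx 1.6926$, as you say. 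One place where you are actually more careful than the paper: the true guarantee is $\rho_{12}\,m+12$, the additive term accounting for the $M=12$ still-open bins, whereas the lemma as stated suppresses it (and is literally false without it for small $m$); the paper silently restores this constant wherever the lemma is applied, namely the $12t$ open-bin term of Corollary \ref{h1} and the $1.693m+12$ bound for multi-graph trees, exactly as you observe at the end of your argument.
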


\subsection{Definitions}

Now we define some terminologies which we would use in our algorithms and their analysis. From now we always take $M=12$ if we talk about the algorithm which uses \emph{HARMONIC$_M$}. Also, terms like bins and colors are used interchangeably. Coloring an edge with a color is analogous to fitting an item in a bin.

\begin{definition}\label{simple}
(Simple graphs and Multi-graphs): An undirected graph $G$ is called a simple graph if there is atmost 1 edge between any pair of vertices. An undirected multi-graph $G$ can have multiple edges between any pair of vertices. 
\end{definition}

\begin{definition}\label{bins}
(Open, Closed and Empty colors): 
\end{definition}
When using \emph{NEXT-FIT}, we maintain a single active color between any pair of neighbouring vertices at all time. So, if a vertex has $r$ neighbours at an instant, then it will have $r$ active colors, one with each neighbour. This active color is also called the open color between a pair of neighbouring vertices. Let at an instant, an edge $e$ arrive between vertices $u$ and $v$. We first try to color $e$ with the current open color between $u$ and $v$. If it violates the condition of proper coloring, then we close the current open color between $u$ and $v$ and create a new open color between them. If $e$ is the first edge between $u$ and $v$, then we introduce a new open color between them and assign that color to $e$. The colors which are closed in this manner are called closed colors and they are never used again. We can say that all these colors are taken from a set/palette. Our main goal is to provide an upper bound to the sufficient size of the palette. Colors which are neither open nor closed are called empty colors. Empty colors are part of the palette and  act like candidates for being an open color in the future.\\

When using \emph{HARMONIC$_M$}, the definitions are very similar as above. The only change is that now at any instant, between any two neighbouring vertices, we have $M=12$ open colors between them corresponding to each type/category instead of just 1 open color as in \emph{NEXT-FIT}. So, if a vertex has $r$ neighbours at an instant, then it will have $Mr$ active colors, $M$ with each neighbour. Let at an instant, edge $e=(u,v)$ arrive which is a type $i$ edge. We first try to color $e$ with the type $i$ open color between $u$ and $v$. If this violates the condition for proper coloring, then we close the type $i$ color between $u$ and $v$ and introduce a new type $i$ color between $u$ and $v$. If edge $e$ is the first type $i$ edge between $u$ and $v$, then we introduce a type $i$ color between $u$ and $v$ in order to color the edges of type $i$ between $u$ and $v$.

\section{Online Weighted Edge Coloring}

Reiterating the online \emph{Weighted edge coloring problem}, let $G:=(V,E)$ be the underlying undirected multi-graph. Each edge appear/arrive one by one in an online manner. Each edge has weight $w \rightarrow [0,1]$. $m$ is the maximum over all vertices of the minumum number of unit-sized bins needed to pack the
weights of the incident edges to that vertex. At an instant, let edge $e \in E$ of weight $w_e$ arrive. We need to color $e$ such that the invariant/condition that sum of weights of edges incident to any vertex and colored with the same color must not exceed $1$ is maintained at all times. We wish to find an upper bound for the sufficient number of colors required for the above procedure. Essentially, we present $two$ algorithms. One uses \emph{NEXT-FIT} online bin packing procedure in which we show that $4m+2t$ colors are enough, where $t$ is the maximum number of neighbors of a vertex over all vertices. The other algorithm uses \emph{HARMONIC$_M$} online bin packing procedure in which we show that $3.39m+24t$ colors are enough (taking $M=12$), definition of $t$ being the same. So, if $t=o(m)$ we can achieve an asymptotic competitive ratio of $4$ and $3.39$ respectively beating the previous best 5 as mentioned before. In our analysis we take value of $M$ to be $12$. As we will see both the algorithms and their analysis are very similar.

\subsection{Algorithm}

We first present our algorithm which uses $NEXT$-$FIT$. Let $t$ be the maximum number of neighbours of a vertex over all vertices. We have a palette $P$ of $4m+2t$ colors numbered from 1 to $4m+2t$.
\begin{algorithm}
\caption{Algorithm for online weighted edge coloring for undirected multi-graphs using \emph{NEXT-FIT}}
\begin{algorithmic}
  
  \STATE Let at an instant, edge $e=(u,v)$ arrive. If $e$ is the first edge appearing between $u$ and $v$, then choose the least numbered color $c'$ from palette $P$ which is empty for both $u$ and $v$ and designate it as the open color between them. Color $e$ with $c'$. Else, let the current open color between vertices $u$ and $v$ at this instant be $c$.
\end{algorithmic}
\begin{algorithmic}[1]
  
  \STATE If color $c$ is the current open color between $u$ and $v$, and edge $(u,v)$ can be colored with $c$, then
color edge $(u,v)$ with $c$.

  \STATE Else, close color $c$. Find the least numbered color $c'$ from palette $P$ which is an empty color of both $u$ and $v$. It is now the open color between $u $ and $v$. Color $(u,v)$ with $c'$.
\end{algorithmic}
\end{algorithm}

Now we present our algorithm which uses $HARMONIC_M$. Let $t$ be the maximum number of neighbours of a vertex over all vertices. We have a palette $P$ of $3.39m+24t$ colors numbered from 1 to $3.39m+24t$. This algorithm is very similar to the previous algorithm. The only difference is that now we have $M$ types/categories of edges based on its weight which we have to take into account. For each type we have an open color between a pair of neighbouring vertices.

\begin{algorithm}
\caption{Algorithm for online weighted edge coloring for undirected multi-graphs using \emph{HARMONIC$_M$}}
\begin{algorithmic}
  
  \STATE Let at an instant, edge $e=(u,v)$ of type $i$ arrive. If $e$ is the first edge of type $i$ appearing between $u$ and $v$, then choose the least numbered color $c'$ from palette $P$ which is empty for both $u$ and $v$ and designate it as the open color of type $i$ between them. Color $e$ with $c'$. Else, let the current open color of type $i$ between vertices $u$ and $v$ at this instant be $c$.
\end{algorithmic}
\begin{algorithmic}[1]
  
  \STATE If color $c$ is the current open color of type $i$ between $u$ and $v$, and edge $(u,v)$ can be colored with $c$, then
color edge $(u,v)$ with $c$.

  \STATE Else, close color $c$. Find the least numbered color $c'$ from palette $P$ which is an empty color of both $u$ and $v$. It is now the open color of type $i$ between $u $ and $v$. Color $(u,v)$ with $c'$.
\end{algorithmic}
\end{algorithm}

\newpage

In the next section, we would prove that palette size of $4m+2t$ and $3.39m+24t$ is sufficient for the algorithms using $NEXT$-$FIT$ and $HARMONIC_M$ respectively.

\subsection{Analysis}

Consider the following lemmas.

\begin{lemma}\label{nf1}
In an online bin packing instance, if we partition all items into $t$ non-empty types, and introduce a requirement that two items of different types cannot be packed in a single bin, then $2m-1+t$ bins are sufficient using NEXT-FIT. Also, at all times, we have exactly $t$ open bins, thus there can be at most $2m-1$ closed bins at any instant.
\end{lemma}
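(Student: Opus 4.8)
The plan is to treat the $t$ types as $t$ independent NEXT-FIT processes running in parallel and to account for their open and closed bins separately. First I would observe that because no bin may contain items of two different types, NEXT-FIT maintains its usual invariant within each type in isolation: for type $i$ there is always exactly one open bin, namely the one created by the most recent type-$i$ item that did not fit. Since all $t$ types are non-empty, once the first item of every type has arrived there are exactly $t$ open bins in total. This immediately yields the ``exactly $t$ open bins'' part of the statement and reduces the problem to bounding the number of closed bins.

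For the closed bins I would reuse the consecutive-bin argument from the proof of Lemma~\ref{proof next fit}, applied within each type and then aggregated. Fix a type $i$ and list its bins in creation order as $b_{i,1}, \dots, b_{i,k_i}, b_{i,k_i+1}$, where $b_{i,1}, \dots, b_{i,k_i}$ are the closed bins and $b_{i,k_i+1}$ is the current open bin. When NEXT-FIT closes $b_{i,j}$, the triggering item did not fit into $b_{i,j}$ but was placed into $b_{i,j+1}$, so $\mathrm{wt}(b_{i,j}) + \mathrm{wt}(b_{i,j+1}) > 1$ for every $j \in \{1,\dots,k_i\}$. Summing these $k_i$ inequalities and telescoping, the left-hand side equals $\mathrm{wt}(b_{i,1}) + 2\sum_{j=2}^{k_i}\mathrm{wt}(b_{i,j}) + \mathrm{wt}(b_{i,k_i+1})$, which is at most $2W_i$, where $W_i$ denotes the total weight of all type-$i$ items. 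Hence $k_i < 2W_i$.

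Finally I would sum over all types. Since every bin of any feasible packing carries total weight at most $1$, the total weight is a lower bound on the number of bins used, so $\sum_{i=1}^t W_i = W \le m$, where $W$ is the total weight of all items. Therefore the total number of closed bins is $\sum_{i=1}^t k_i < 2\sum_{i=1}^t W_i = 2W \le 2m$, and since this count is an integer it is at most $2m-1$. Adding the $t$ open bins gives at most $2m-1+t$ bins in total, which is exactly the claim.

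The only real subtlety to watch is the inequality $W \le m$: it relies on the fact that in any feasible packing each bin holds weight at most $1$, and the no-mixing constraint does not weaken this bound, so I never need to decompose $m$ into per-type optima $m_i$. The consecutive-pair inequalities are the heart of the argument, and the main care is in the bookkeeping of the telescoping sum, ensuring that the doubled interior weights are correctly bounded by $2W_i$ rather than overcounted.
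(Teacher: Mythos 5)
Your proof is correct, but it takes a genuinely different route from the paper. The paper proves this lemma by reduction to Lemma~\ref{proof next fit}: it argues by contradiction that if the typed process ever needed $2m$ closed bins, then feeding the same items to \emph{standard} NEXT-FIT, reordered so that all type-$1$ items come first, then type $2$, and so on, would also force $2m$ closed bins, contradicting the $2m-1$ bound. Your argument instead re-derives the weight bound directly inside each type: the consecutive-pair inequality $\mathrm{wt}(b_{i,j})+\mathrm{wt}(b_{i,j+1})>1$ gives $k_i < 2W_i$ closed bins for type $i$, and summing over types with $\sum_i W_i = W \le m$ yields at most $2m-1$ closed bins overall. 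The two approaches share the same underlying volume argument (it is also what powers Lemma~\ref{proof next fit} itself), but yours is self-contained and arguably tighter as a piece of reasoning: the paper's reordering step is stated without justification, and it glosses over the fact that in the reordered standard run the first item of type $i+1$ may be packed into the still-open last bin of type $i$, so the bin sequences of the two runs do not literally coincide; making that reduction airtight requires an extra argument, which your direct accounting avoids entirely. Your closing observation that $m$ may be taken as the \emph{unconstrained} optimum (since $W \le m$ holds for any feasible packing, typed or not) is also worth keeping explicit, since that is exactly the form in which the lemma is later invoked in Lemma~\ref{l2}. Two cosmetic points: what you call ``telescoping'' is really just counting how often each bin appears across the $k_i$ inequalities (nothing cancels), and the ``exactly $t$ open bins'' claim holds from the start in the paper's formulation because it pre-assigns one open bin per type, whereas in your formulation it holds only once every type has appeared --- a harmless discrepancy, since only the upper bound of $t$ open bins is ever used.
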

\begin{proof}
Assign $t$ open bins at the start, named $c_{1}, c_{2}, ..., c_{t}$, where bin $c_{i}$ would accommodate items of type $i$. On the arrival of an item of type $i$, if the current bin $c_{i}$ is unable to fit the item, then close the bin $c_{i}$, and replace it with another new bin, which will now be named $c_{i}$. Place the item in this new bin of type $i$. Carry on this procedure for all items.\\
At last there will be $t$ open bins, and the number of closed bins will be $< 2m$. Otherwise, if number of closed bins is $\geq 2m$, then it would be a contradiction to the $2m-1$ bound guaranteed by Lemma \ref{proof next fit}. This is because here we are essentially following the same rules as standard \emph{NEXT-FIT}, the only difference is upon arrival of an item of type $i$, we are checking the bin $c_i$ instead of a common open bin for all items in standard \emph{NEXT-FIT}. Using the standard \emph{NEXT-FIT}, first produce the items of type 1, then type 2 and so on till type $t$. So, if $\geq 2m$ closed bins are required, then in this case also $\geq 2m$ will be required contradicting Lemma \ref{proof next fit}.  
Thus, overall we need at most $2m-1+t$ different bins for all items under this scenario. In other words, at most $2m-1$ closed bins as we have exactly $t$ open bins.
\end{proof}

\begin{corollary}\label{h1}
Using $HARMONIC_M$ (with $M=12$) in the same scenario of Lemma \ref{nf1}, $1.693m+12t$ bins are sufficient. More specifically, there would be $12t$ open bins at any instant, and there would be at most $1.693m$ closed bins at any instant using Lemma \ref{proof harmonic}.
\end{corollary}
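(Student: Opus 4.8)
The plan is to mirror the proof of Lemma~\ref{nf1}, replacing the role of \emph{NEXT-FIT} by \emph{HARMONIC$_M$} and invoking Lemma~\ref{proof harmonic} in place of Lemma~\ref{proof next fit}. First I would run an independent copy of \emph{HARMONIC$_M$} inside each of the $t$ types. Since a single \emph{HARMONIC$_M$} instance keeps one open bin for each of its $M=12$ size categories, running it separately within each type keeps exactly $M=12$ open bins per type, i.e.\ $12t$ open bins in total at every instant. Consequently the total number of bins ever used splits as $12t$ open bins plus the closed bins, so it remains only to bound the latter by $1.693m$.

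For the closed-bin count I would reduce to a single \emph{HARMONIC$_M$} instance exactly as in Lemma~\ref{nf1}: feed all items of type $1$ first (in their original order), then those of type $2$, and so on up to type $t$, into one global \emph{HARMONIC$_M$} run. The key comparison is category by category. For a size category $i<M$ each bin holds exactly $i$ items, so if type $j$ contributes $n_{j,i}$ items of category $i$, the per-type run uses $\lceil n_{j,i}/i\rceil$ bins for it, of which one is left open; summing the closed ones over $j$ gives $\sum_j(\lceil n_{j,i}/i\rceil-1)\le \lceil (\sum_j n_{j,i})/i\rceil$, which is precisely the number of category-$i$ bins opened by the single global instance. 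The analogous inequality for the small-item category $M$ follows from the same \emph{NEXT-FIT}-style accounting used in Lemma~\ref{proof next fit}. Summing over all $M$ categories shows that the number of closed bins produced by the per-type runs is at most the total number of bins used by the single global \emph{HARMONIC$_M$} instance, which by Lemma~\ref{proof harmonic} is at most $1.693m$.

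Combining the two counts yields at most $1.693m$ closed bins and exactly $12t$ open bins, hence at most $1.693m+12t$ bins overall, as claimed. I expect the main obstacle to be this reduction step: one must argue carefully that splitting the single global run into $t$ independent per-type runs can only convert shared boundary bins into additional open bins and never forces extra closed bins. Once this monotonicity is established category by category --- separately for the exactly-$i$-per-bin categories $i<M$ and for the \emph{NEXT-FIT}-like small-item category $M$ --- the bound on closed bins, and therefore the corollary, follows immediately from Lemma~\ref{proof harmonic}.
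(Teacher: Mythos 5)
The paper gives no separate proof of this corollary: it is simply asserted to follow by rerunning the argument of Lemma~\ref{nf1} (per-type open bins, plus the reorder-by-type reduction to a single global run) with the $1.693m$ bound of Lemma~\ref{proof harmonic} in place of the $2m-1$ bound of Lemma~\ref{proof next fit}, and your proposal is exactly that argument, spelled out category by category, so it matches the paper's approach. One caveat: your claim that the category-$M$ comparison follows from the ``\emph{NEXT-FIT}-style accounting'' of Lemma~\ref{proof next fit} is not quite right --- weight accounting alone cannot give a bin-for-bin comparison between the per-type runs and the global run; what is needed is the monotonicity fact that \emph{NEXT-FIT} started on a partially filled bin opens at least as many bins as \emph{NEXT-FIT} started fresh (provable by an invariant on open-bin index and content) --- but you correctly flag this monotonicity as the step to be established, and the identical step is equally hand-waved in the paper's own proof of Lemma~\ref{nf1}.
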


Let $t$ be the maximum number of neighbors of a vertex over all vertices of $G:=(V,E)$. Now we will first present the analysis for the algorithm which uses $NEXT$-$FIT$ and prove our claim that $4m+2t$ colors are sufficient. The bound using $HARMONIC_M$ would easily follow by just replacing $2m$ with $1.693m$ and $t$ with $12t$ in the following analysis.

\newpage
\begin{lemma}\label{l2}
For any vertex $w$ of the graph, the number of closed bins used by $w$ is at most $2m-1$ at any instant.  
\end{lemma}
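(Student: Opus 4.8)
The plan is to reduce the statement to Lemma \ref{nf1} by viewing the coloring process locally at the fixed vertex $w$. First I would restrict attention to the edges incident to $w$ and the colors they receive, and partition these incident edges into types according to their other endpoint: an edge $(w,v)$ has type $v$. Since $w$ has at most $t$ neighbors, there are at most $t$ nonempty types. The key observation is that the algorithm, restricted to $w$, behaves exactly like the typed NEXT-FIT process of Lemma \ref{nf1}: for each neighbor $v$ the algorithm maintains a single open color between $w$ and $v$; an arriving edge $(w,v)$ is colored with this open color whenever it fits, and otherwise that color is closed permanently and a fresh open color is created. Edges to distinct neighbors never share an open color, which matches the requirement that items of different types go into different bins.

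Next I would argue that a color once closed at $w$ can never be reused by $w$. A closed color between $w$ and $v$ carries weight contributed by edges incident to $w$, so it is not empty for $w$; since the algorithm only opens colors that are empty for both endpoints, such a color is never selected again by $w$. Consequently the closed colors used by $w$ are pairwise distinct, and their number equals exactly the number of closed bins produced by the typed NEXT-FIT process applied to the sequence of edges incident to $w$.

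I would then invoke the definition of $m$: the minimum number of unit-sized bins needed to pack all edges incident to $w$ is some $m_w \le m$. Applying Lemma \ref{nf1} to this local instance (optimal value $m_w$, at most $t$ types) bounds the number of closed bins by $2m_w - 1 \le 2m - 1$, which is exactly the claimed bound and holds at every instant. The bound for the $HARMONIC_M$ algorithm would then follow verbatim via Corollary \ref{h1}, replacing $2m-1$ by $1.693m$.

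The main obstacle is making the reduction to Lemma \ref{nf1} fully rigorous, and in particular verifying that the global palette mechanism --- where a newly opened color is chosen as the least-numbered color empty for both endpoints --- does not let a closed color at $w$ be silently reopened, nor let two neighbor-relationships of $w$ momentarily share the same open color. Once one checks that every color touched by $w$ is, at each instant, associated with a unique neighbor of $w$ and that closure is permanent from $w$'s side, the local process is genuinely the typed NEXT-FIT of Lemma \ref{nf1}, and the bound follows immediately.
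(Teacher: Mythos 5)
Your proposal is correct and follows essentially the same route as the paper: both bound the closed colors at $w$ by identifying the coloring process restricted to $w$ with the typed NEXT-FIT packing of Lemma \ref{nf1}, with the neighbors of $w$ playing the role of the $t$ types and incident edges playing the role of items. The only difference is presentational --- the paper argues by contradiction (re-running the arrived edges as a pure typed bin-packing instance and invoking Lemma \ref{nf1}), whereas you argue the equivalence directly and are somewhat more explicit about why a color closed at $w$ can never be reopened; both give the bound $2m-1$.
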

\begin{proof}
The number of open bins is clearly at most $t$ because $w$ has at most $t$ neighbours. So, specifically, we need to show that number of closed bins used by $w$ is at most $2m-1$. Actually, this follows directly from Lemma \ref{nf1}. We can relate the $t$ types in Lemma \ref{nf1} with $t$ neighbours here, and the items with the edges.

We will prove it by contradiction. Let at an instant, edge $e$ incident to $w$ arrive which causes it's $2m^{th}$ bin to close in order to fit edge $e$. The claim is that such an instant can $never$ occur. Let suppose such an incident indeed occur. Now, take all the items which were previously packed in the $2m$ closed bins and the item $e$. Then, imagine producing all these items (i.e. edges) in the same order as they arrive in the original graph scenario, and try to pack these items using \emph{NEXT-FIT} along with the condition that there are $t$ types (corresponding to $t$ neighbour vertices) and items of 2 different types cannot be placed together. We would then have to close the $2m^{th}$ when the item $e$ arrives leading to $2m$ closed bins, which is a contradiction to Lemma \ref{nf1}. 
\end{proof}

Now we arrive at our main theorem.

\begin{theorem}\label{main1}
For online edge arrival scenario of proper weighted coloring of graphs, $1.693\cdot 2m + 24t$ colors are sufficient using $HARMONIC_M$ (with $M=12$) and $4m + 2t$ colors are sufficient using $NEXT$-$FIT$ algorithm for a proper coloring solution. $t$ is the maximum number of neighbours  over all the vertices of the graph and $m$ is the maximum over all vertices of minimum number of bins required to pack all edges incident to a vertex such that sum of weights packed in each bin $\leq 1$.
\end{theorem}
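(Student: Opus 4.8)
The plan is to reduce the whole statement to a single counting argument: I will show that whenever the algorithm must open a fresh color for an arriving edge $e=(u,v)$, at least one color in the palette is simultaneously empty at $u$ and at $v$, so the greedy rule ``choose the least-numbered color empty for both $u$ and $v$'' never runs out of candidates within a palette of the claimed size. Everything hinges on bounding, at the critical instant, the number of colors that are \emph{unavailable}, i.e. open or closed at $u$, or open or closed at $v$.

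I would first fix the instant at which the algorithm closes the old open color of $(u,v)$ and searches for a new color $c'$ (step~3), and count the non-empty colors at $u$ separately from those at $v$. For the NEXT-FIT variant, the open colors of $u$ are one per neighbor, hence at most $t$, while by Lemma~\ref{l2} the closed colors of $u$ number at most $2m-1$ \emph{even after} the just-closed color is charged to $u$ (the lemma bounds closed colors at any instant). So the colors non-empty for $u$ total at most $(2m-1)+t$, and symmetrically for $v$. A color is ineligible for $c'$ exactly when it is non-empty for $u$ \emph{or} for $v$, so by the union bound the ineligible colors number at most $2\bigl((2m-1)+t\bigr)=4m-2+2t$. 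Since the palette has $4m+2t>4m-2+2t$ colors, an empty-for-both color always exists and the algorithm succeeds, giving the $4m+2t$ bound. For the HARMONIC$_M$ variant the argument is identical after the two substitutions dictated by Corollary~\ref{h1}: each neighbor now contributes up to $M=12$ open colors, so a vertex has at most $12t$ open colors, and its closed colors number at most $1.693m$ instead of $2m-1$. The ineligible colors then total at most $2(1.693m+12t)=1.693\cdot 2m+24t$, matching the stated palette size.

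The step I expect to need the most care is the bookkeeping at the critical instant: I must verify that the color closed in step~3 is already included in the $2m-1$ (respectively $1.693m$) closed-color budget of \emph{both} endpoints, and that the open-color count ($t$, resp.\ $12t$) holds simultaneously for $u$ and $v$ at that same moment, so that both per-vertex budgets are pinned to one common instant before the union bound is applied. Once that is settled the union bound is immediate and the remainder is arithmetic; the mild slack ($4m+2t$ versus the required $4m-1+2t$) simply absorbs the off-by-one.
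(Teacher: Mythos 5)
Your proposal is correct and follows essentially the same route as the paper: both bound the non-empty colors at the two endpoints at the critical instant via Lemma~\ref{l2} (at most $2m-1$, resp.\ $1.693m$, closed colors) plus the open-color count ($t$, resp.\ $12t$) per vertex, apply a union bound over $u$ and $v$, and observe the palette exceeds $4m-2+2t$. The only difference is trivial bookkeeping: the paper counts just before the arrival (treating the shared color as open, hence its extra $-1$), while you charge the just-closed color to both closed budgets; both are covered by Lemma~\ref{l2}.
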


\begin{proof}
When using \emph{NEXT-FIT}, let at an instant, edge $e=(u,v)$ arrive, 
and let $g_1$ = number of closed bins used by $u$ and  $g_2$ = number of closed bins used by $v$ and let bin $c$ be the current open bin between $u$ and $v$, all just before arrival of edge $e$.

If edge $e$ is able to fit in bin $c$, then fit edge $e$ in bin $c$ and continue.
Else, close bin $c$. Replace it with a new bin and fit edge $e$ in the new bin.

Now, we need to prove that if we have a palette of $4m + 2t$ colors, then the new bin can be found from this palette without any violations. In other words, $4m+2t$ colors are sufficient. From Lemma \ref{l2}, we have $g_1$ $<$ $2m$ and $g_2$ $<$ $2m$. Vertex $u$ has at most $t$ open bins and vertex $v$ has at most $t$ open bins just before arrival of $e$. So, just before arrival of $e$, the sum of closed and open bins for both vertices $u$ and $v$ is $g_1+g_2+2t$ $\leq$ $2m-1+2m-1+2t = 4m-2+2t$ (It is indeed $4m-2+2t-1$ because of the common open bin between $u$ and $v$). So, given our pool of $4m + 2t$ colors, we will always be able to find a new color without any violations to either vertices. Also, $u$ and $v$ are the only vertices where a violation can occur at this step and thus we care about only these.

If a new bin is selected, then $g_1$ and $g_2$ both increments by $1$ but the number of open bins for both remains at most $t$. Though $g_1$ and $g_2$ increase by $1$, it is guaranteed that they never reach $2m$. Because it would be a violation to Lemma \ref{l2} as then $g_1=2m$ $>$ $2m-1$, contradicting Lemma \ref{l2}.

In other words, the above arguments imply that if a specific vertex has $2m-1$ closed bins, then none of it's $t$ open bins will be forced to close. Because a bin is closed only when another new bin in opened and thus leading to a contradiction of Lemma \ref{l2}.

Finally, we can say that $4m+2t$ colors are enough (indeed $4m-1+2t$ colors are enough). The bound $1.693 \cdot 2m-1+24t$ directly follows for $HARMONIC_M$ (with $M=12$) with the exact same arguments as above, just replace $t$ with $12t$ and $2m$ with $1.693m$. Also as described in Algorithm 2, take into account the type of edge arriving. An edge $e=(u,v)$ of type $i$ must be fit into the open bin of type $i$ between $u$ and $v$.
\end{proof}

\begin{corollary}
From Theorem \ref{main1} we get, if $t=o(m)$, we have a competitive ratio of $1.693\cdot 2 < 3.39$ using $HARMONIC_M$ (with $M=12$) and $4$ using $NEXT$-$FIT$.\\
In other words, the new best asymptotic competitive ratio is $3.39$, an improvement over previous best of $5$. 
\end{corollary}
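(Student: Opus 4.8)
The plan is to read off the competitive ratio directly from the color bounds of Theorem~\ref{main1} together with the elementary lower bound $\mathrm{OPT} \ge m$ on the offline optimum. First I would justify that lower bound: let $v^*$ be a vertex attaining the maximum in the definition of $m$. In any proper coloring, the set of edges incident to $v^*$ that share a common color has total weight at most $1$, so each color acts as a unit-sized bin for these edges; hence the colors appearing at $v^*$ form a valid packing of its incident edges, and their number is at least the minimum number of bins needed there, which is exactly $m$. Consequently any proper coloring, and in particular the optimal offline one, uses at least $m$ colors, so $\mathrm{OPT} \ge m$.

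Next I would combine this with Theorem~\ref{main1}. Since the algorithm uses at most $1.693\cdot 2m + 24t$ colors with $HARMONIC_M$ (with $M=12$) and at most $4m+2t$ colors with $NEXT$-$FIT$, the competitive ratio satisfies
\[
\frac{\mathrm{ALG}}{\mathrm{OPT}} \;\le\; \frac{\mathrm{ALG}}{m} \;=\; 1.693\cdot 2 + \frac{24t}{m}
\]
in the $HARMONIC_M$ case and $4 + \frac{2t}{m}$ in the $NEXT$-$FIT$ case.

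Finally I would invoke the hypothesis $t=o(m)$, which forces the lower-order terms $\frac{24t}{m}$ and $\frac{2t}{m}$ to tend to $0$ as $m \to \infty$. Taking the asymptotic competitive ratio (the limit as $m$ grows) therefore yields $1.693\cdot 2 = 3.386 < 3.39$ for $HARMONIC_M$ and $4$ for $NEXT$-$FIT$, improving on the previous best of $5$ due to Correa and Goemans.

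The only subtlety, rather than a genuine obstacle, is being explicit about the meaning of \emph{asymptotic} competitive ratio: the claim is a statement about the limit as $m \to \infty$, and the hypothesis $t=o(m)$ is precisely what makes the additive contributions arising from the open bins (the $t$-dependent terms in Theorem~\ref{main1}) negligible in this limit. Once the lower bound $\mathrm{OPT}\ge m$ and the interpretation of the limit are fixed, everything else is an immediate consequence of Theorem~\ref{main1}.
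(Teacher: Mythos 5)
Your proposal is correct and follows essentially the same route as the paper, which states this corollary without proof as an immediate consequence of Theorem~\ref{main1}: divide the color bounds $1.693\cdot 2m+24t$ and $4m+2t$ by the optimum and let $t=o(m)$ kill the additive terms. The one detail you supply that the paper leaves implicit is the justification that the offline optimum is at least $m$ (each color class at the vertex attaining the maximum is a unit-sized bin), which is exactly the right way to make the phrase ``competitive ratio'' rigorous here.
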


\subsection{Tightness of Analysis}

The number $1.6910$ is the greatest lower bound for the worst-case performance ratio of any $O(1)$-space online bin packing algorithm \cite{lee1985simple}. For $M=12$, the best bound is $\approx 1.6926 < 1.693$ \cite{lee1985simple}. Now, we would be constructing an example where at least $1.693\cdot 2=3.386m \approx 3.39m$ colors are required when using our $HARMONIC_M$ coloring algorithm with $M=12$. In other words, where the competitive ratio is $3.39$. $3.39$ is chosen for simplicity in expressing the procedure, else following the below construction idea, competitive ratio very close to $1.691\cdot 2$ can be achieved by increasing the value of $M$.

Choose positive integers $\epsilon$ and $m$ such that $m\approx10^{20}$, $\frac{m}{\epsilon}\approx10^{10}$, $\frac{\epsilon}{12}\approx10^{10}$ and $1.693m, \frac{1.693m}{\epsilon}$ are integers. One thing to note here is that on the arrival of an edge $e=(u,v)$, if $e$ can not fit in the current open bin between $u$ and $v$, then the algorithm should provide a deterministic way of choosing the next open bin between them. The way is that the algorithm chooses the least numbered bin possible as the next new open bin between $u$ and $v$. Below are the steps of construction of the tight example taking $M=12$. 

\begin{itemize}
\item Let $u_1$ and $v_1$ be the first two vertices. Introduce edges only between them such that the number of closed bins for them be $1.693m-\epsilon$. Choose these bins in a contiguous manner starting from the first bin. This can be achieved because we have the bound equal to $1.693m$ as stated above. Also note that due to the lower bound, the number of closed bins for a vertex can be at least up to $1.691m-1$. (may be even more depending upon the value of $M$).
\item Like this form another $\frac{1.693m}{\epsilon}-1$ pairs of vertices $u_i$ and $v_i$ for $i=2,3,4...,\frac{1.693m}{\epsilon}$. So, now we have $\frac{1.693m}{\epsilon}$ pairs of vertices, with each pair having the first $1.693m-\epsilon$ bins as closed bins between them.
\item Now we introduce a new vertex $w$ with multiple edges between vertices $u_1$ and $w$ such that the number of closed bins needed by $w$ is $\epsilon$. Now $w$ can not choose these $\epsilon$ bins from the first $1.693m-\epsilon$ bins as then it would create a clash with vertex $u_1$. So, $w$ chooses bins from $1.693m-\epsilon$ to $1.693m$. Then, introduce multiple edges between vertices $u_2$ and $w$ such that $\epsilon$ closed bins are required. With similar reasoning as before, $w$ chooses these bins to be from $1.693m$ to $1.693m+\epsilon$. Then, introduce multiple edges between vertices $u_3$ and $w$ such that $\epsilon$ closed bins are required. With similar reasoning as before, $w$ chooses these bins to be from $1.693m+\epsilon$ to $1.693m+2\cdot\epsilon$. Carry on this procedure for all $u_i$ for $i=4,5,...,\frac{1.693m}{\epsilon}$.
\item Thus now vertex $w$ has used bins from $1.693m-\epsilon$ to $1.693m+(\frac{1.693m}{\epsilon}-1)\cdot\epsilon=3.386m-\epsilon$.
So, in total $3.386m-\epsilon$ bins are used in the whole algorithm leading to a competitive ratio of $3.386 \approx 3.39$ (because $m\gg\epsilon$).

Note that here the number of neighbours of $w=\frac{1.693m}{\epsilon}\approx10^{10}\approx m^{0.5}\approx o(m)$. Also, $\epsilon \gg 12$ is chosen so that we don't care about the $12$ open bins between any pair of neighbour vertices in any of the above steps.
\end{itemize}

\section{Offline Weighted Edge Coloring}
In the offline version of the problem, we are given the whole undirected graph upfront. We need to assign colors to each edge maintaining the condition of proper coloring. The current best upper bound is $\lceil 2.2223m \rceil$ given by Khan and Singh \cite{khan2015weighted} for bipartite graphs. We will present improved algorithms with better upper bounds for some special graphs.

\subsection{Simple graph with edge disjoint cycles}
\begin{theorem}
Given an undirected simple graph (no multi-edges between 2 vertices) $G:=(V,E)$ with edge disjoint cycles, it can colored using $m+1$ colors.
\end{theorem}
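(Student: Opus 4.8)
The plan is to exploit the block structure of the graph. A graph whose cycles are edge-disjoint is a \emph{cactus}: every block (biconnected component) is either a single edge, i.e.\ a bridge, or a single cycle, and any two distinct cycles meet in at most one vertex. I would first reduce to the connected case, since distinct components are vertex-disjoint and can reuse the same palette $\{1,\dots,m+1\}$. Then I would root the block--cut tree and process the blocks top-down, a parent block before its children, maintaining the invariant that at every already-processed vertex the edges colored so far form a proper weighted coloring using only colors in $\{1,\dots,m+1\}$. The point of the top-down order is that when a new block is attached at a cut vertex $c$, it adds load only at $c$; all the other (lower) vertices of the block carry no prior load at the moment they are first touched.

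For a bridge block the new edge has a free endpoint, so the only constraint is at the cut vertex $c$, and I would assign it a color whose current weight at $c$ leaves room for its weight. For a cycle block $C$ with top cut vertex $c$, I would color the edges of $C$ by a \emph{proper edge-coloring of the cycle}: two alternating colors when $C$ is even and three colors when $C$ is odd (and simply one color in the degenerate case $m=1$, where every vertex's incident pair already sums to at most $1$). Because a proper edge-coloring places at most one edge of each color at every vertex, the weight constraint \emph{inside} the cycle is automatically satisfied, a single edge having weight $\le 1$; the only coupling is at $c$ and at any lower vertices of $C$ that are themselves cut vertices. The extra, $(m{+}1)$-st color enters exactly here: an odd cycle cannot be properly $2$-edge-colored, so it forces a third color, which is the sole origin of the ``$+1$''. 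Since $m\ge 2$ whenever some adjacent pair of weights exceeds $1$, the budget $m+1\ge 3$ always accommodates the needed $2$ or $3$ colors.

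The main obstacle is showing that the palette never runs out, i.e.\ that a compatible color in $\{1,\dots,m+1\}$ always exists at a cut vertex $c$, which may have large degree because many blocks pass through it. Here a naive ``reuse any color that still fits'' greedy is not enough: a short calculation shows that inserting an edge of weight $w$ at $c$ can fail to find a compatible color among $m+1$ once $w>1/m$, since then the already-used color classes can each be loaded just above $1-w$ while the total incident weight stays within the $m$-bin bound. I would therefore not treat the colors as arbitrary, but tie the coloring at each vertex to an \emph{optimal bin packing} of its incident edges, which uses at most $b(v)\le m$ bins, and argue that the colors assigned by the procedure refine or agree with such a packing, keeping color $m+1$ free as a slack/parity color. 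The second half of the obstacle is proving that a \emph{single} extra color suffices globally for all odd cycles rather than one per cycle; I would resolve this by placing each odd cycle's parity ``defect'' at a degree-$2$ vertex of that cycle (within the top-down sweep the closing vertex of a cycle can be chosen to carry no other current load at that moment), so that color $m+1$ ever carries at most one low-interaction edge at any vertex and creates no new conflict.

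In summary, the even/odd cycle coloring and the bridge/forest portion are routine; the real work is the feasibility argument at high-degree cut vertices against the $m$-bin bound, together with the global reuse of the one extra color. I expect that step to require the bin-packing-guided invariant above rather than a pure greedy, and it is where I would concentrate the rigorous verification.
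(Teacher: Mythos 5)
Your proposal does not follow the paper's route, but that is not the issue; the issue is that the step you yourself call ``the real work'' --- feasibility at a cut vertex $c$ shared by many blocks --- is precisely the content of the theorem, and your write-up leaves it unproved. Because you color block-by-block, the edges incident to $c$ receive their colors scattered over many different steps, and the only tool bounding the number of colors needed at $c$ is the $m$-bin packability of the \emph{entire} edge set at $c$. You correctly note that a pure greedy can be pushed past $m+1$ colors, and you then appeal to a ``bin-packing-guided invariant'': make the colors at $c$ refine or agree with a fixed optimal packing of $c$'s incident edges. But this clashes with the other pillar of your construction. Insisting on a \emph{proper} $2$- or $3$-edge-coloring of each cycle forces the two edges of every cycle at $c$ to receive \emph{distinct} palette colors, whereas an optimal packing at $c$ may place both edges of a cycle in the same bin (e.g.\ two edges of weight $1/2$). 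So the coloring cannot ``agree with'' the packing: every bin containing such a conflicting pair must be split, and if $k$ of the $m$ bins contain one, naive splitting gives $m+k$ colors. To salvage this you would need a recombination lemma showing the split halves of different bins can be merged so that only one extra color survives globally, and moreover that this can be done consistently at \emph{both} endpoints of every edge as the top-down sweep proceeds. That lemma is the actual theorem, and it is nowhere in the proposal.

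For contrast, the paper's proof sidesteps the difficulty entirely by working vertex-at-a-time rather than block-at-a-time: it runs a BFS and, when visiting a vertex $u$, colors \emph{all} still-uncolored edges incident to $u$ in one shot according to an optimal $m$-bin packing of $u$'s edges. In a graph with edge-disjoint cycles, at most two edges of $u$ are already colored at that moment, so the packing can be imposed on everything else, and the single spare color only has to repair the interaction with those at most two precolored edges (and the possible clash at a child that already holds one colored edge). Note in particular that the paper's $+1$ arises from this precoloring mismatch, not --- as in your account --- from the parity of odd cycles; in the weighted setting an odd cycle does not intrinsically need three colors, since adjacent edges whose weights sum to at most $1$ may share a color. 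If you wish to keep the block-decomposition route, you must formulate and prove the separation/recombination lemma above; until then the proposal is a plan, not a proof.
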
 
\begin{proof}

We have a palette $P$ of $m+1$ colors numbered from 1 to $m+1$. Consider a simpler example where we are given an undirected simple graph $T$ with no cycles that is $T$ is a undirected simple tree. Let vertex $r$ be the root vertex (if not given, choose any vertex as $r$). We will traverse $T$ in a BFS manner starting from the root node $r$. Whenever visiting a node $u$, we will color the edges incident to it.

Consider any step of this traversal where we visit node $u$. Let the $k$ child nodes of $u$ be $v_1,v_2,\dots,v_k$ and let $w$ be the parent node of $u$ (if $u\neq r$). As we are  traversing in a BFS manner, the edge $(w,u)$ must have been already colored and none of the  edges $(u,v_1),(u,v_2),\dots,(u,v_k)$ are colored (which we have to color in this step). So, with respect to $u$, its only one of the incident edges have been colored. So, by the definition of $m$, all the remaining edges incident to $u$ can be colored taking at most $m$ colors in total from palette $P$. This will not create any violations in any of the child nodes, as only one of the edges of each child node is colored in this step. Carry on this procedure for each node visited and eventually we have colored all the edges of the tree. Thus, we can properly color a undirected simple tree in $m$ colors.

Now if we also have edge disjoint cycles, then this case is very similar to the above. The only difference is that when visiting a node $u$, it might happen that two of its edges have already been assigned different colors. But it might happen that the optimal packing of edges incident to $u$ in $m$ unit-sized bins assign same colors to these two already colored edges or vice-versa. It might also happen that when we color the edges between the current vertex and any of its child node, a clash might occur at the child node. This is because now  2 edges of a child node may be colored before visiting that child node which may create a violation. Thus we place an extra color in palette $P$ to overcome this violation. Thus, $m+1$ colors are sufficient in this case.

This result can be further generalised. If we are given an undirected simple graph $G$. Let $y$ be the maximum number of cycles an edge is a part of taken over all the edges. Then while traversing the graph in a BFS manner, it may happen that $y+1$ edges have already been colored. So, in the worst case scenario, we would require $m+y$ colors for properly coloring all the edges in this case.
\end{proof}

\subsection{Multi-graph tree}

\begin{theorem}
Given an undirected multi-graph tree $G:=(V,E)$, it can be properly colored using $2m$ colors using \emph{NEXT-FIT} and using $1.693m + 12$ colors using \emph{HARMONIC$_M$} (with $M=12$). Below we present the algorithms and their analysis.
\end{theorem}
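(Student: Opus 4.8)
The plan is to root the tree at an arbitrary vertex $r$ and to process the vertices in BFS order, coloring at each visited vertex exactly the (parallel) edges joining it to its children. Because $G$ is a tree, the only edges incident to a vertex $u$ that can already be colored when $u$ is reached are the edges joining $u$ to its parent $w$; every other edge at $u$ goes to a not-yet-visited child and is still uncolored. I would maintain this as an invariant throughout the traversal, so that the task at $u$ is always: extend a fixed partial coloring (the parent edges) to all edges incident to $u$ so that every color class has total weight at most $1$ at $u$. The structural simplification offered by the tree is that no conflict can be created at a child $v_i$ when we color $(u,v_i)$: at that moment $v_i$ has not been processed, so the edges we now assign are the only colored edges at $v_i$, and any proper packing at $u$ keeps each color class below $1$ at $v_i$ as well, being a subset of a class at $u$. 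Thus the whole difficulty is concentrated at the center $u$.

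Next I would phrase the per-vertex task as an online bin-packing instance and feed it to the chosen subroutine. I treat the parent edges of $u$ as items already placed in their inherited bins and let the child edges of $u$ ``arrive'' and be packed by \emph{NEXT-FIT} (respectively \emph{HARMONIC$_M$} with $M=12$), opening a new color only when forced. Since all edges incident to $u$ pack into $m$ unit bins by the definition of $m$, Lemma~\ref{proof next fit} should bound the number of colors seen at $u$ by $2m-1$ for \emph{NEXT-FIT}, and Lemma~\ref{proof harmonic} together with the $M=12$ open bins should bound it by $1.693m+12$ for \emph{HARMONIC$_M$}. Colors are drawn from a single global palette and freely reused at non-adjacent vertices; because each edge lies on exactly one parent and one child and the child side is conflict-free by the first paragraph, a color assigned while processing $u$ can clash only with the colors already present at $u$ itself. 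Hence a palette whose size matches the per-vertex bound, namely $2m$ and $1.693m+12$ respectively, should suffice, and dividing by $m$ with $t=o(m)$ recovers the stated ratios.

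The step I expect to be the main obstacle is ensuring that the colors needed for the parent edges and those needed for the child edges do \emph{not} simply add up at a vertex, since a naive accounting would count the inherited parent colors on top of a fresh packing of the child edges and cost roughly $4m$ colors rather than $2m$. The parent edges of $u$ were colored earlier, while $w$ was being processed, and they may be spread over several colors, each only partially loaded at $u$ (carrying only the $w$--$u$ weight); to stay within the single-run guarantee one must reuse the unused room in these parent colors to absorb child edges, so that the set of colors active at $u$ is essentially one packing of all of $u$'s incident edges rather than two. Making this reuse consistent across the whole tree is the delicate point, and it creates a genuine tension: presenting the parent edges to $u$ as a clean prefix of its packing wants them grouped into dedicated, full bins, whereas keeping the parent vertex $w$ within its own budget wants $w$'s child-groups packed continuously and hence interleaved. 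Reconciling these two requirements simultaneously at every vertex, together with accounting for the per-step open-bin overhead that contributes the additive $M=12$ term for \emph{HARMONIC$_M$}, is exactly what pins down the leading constants $2$ and $1.693$, and it is where I would concentrate the verification.
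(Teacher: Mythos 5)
Your overall framework matches the paper's: root the tree, traverse in BFS order, color only the child edges at each visited vertex, and observe that no conflict can arise at a child because at that moment its only colored edges are those just assigned, whose per-color weight at the child is bounded by the corresponding class weight at $u$. The paper does exactly this. However, the step you yourself single out as the ``main obstacle'' --- ensuring the parent colors and the child colors at $u$ merge into \emph{one} packing rather than adding up to $\approx 4m$ --- is precisely where the entire proof content lies, and your proposal does not supply the mechanism that resolves it. The paper's resolution is concrete and simple: at each vertex, child groups are colored one group at a time, each group packed by NEXT-FIT into contiguously numbered colors, and crucially the packing of $u$'s first child group \emph{starts from the still-open last color} $c_2$ used for the parent edges $(w,u)$, with each subsequent group starting from the last color of the previous group. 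Because the $w$--$u$ edges were themselves packed as one contiguous NEXT-FIT group when $w$ was processed, and because within any group both endpoints carry identical weight (so a color is closed only when $u$'s side overflows, never the child's side first), the full sequence of colors containing $u$'s incident edges is a legitimate single NEXT-FIT run with respect to $u$ --- except possibly the very first color $c_1$, which may have closed ``early'' owing to weight at $w$ from earlier groups. Lemma~\ref{proof next fit} then gives at most $2m-1$ colors for the clean part, plus $1$ for $c_1$, hence $2m$; the identical argument with the $12$ carried-over open colors of HARMONIC$_{M}$ gives $1.693m+12$ via Lemma~\ref{proof harmonic}.

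Two further points. First, the ``genuine tension'' you describe --- that parent edges want to sit in dedicated full bins while $w$'s budget wants its groups interleaved --- does not actually arise: the scheme never requires parent edges to fill their bins; the impurity of the single boundary color is exactly what the additive $+1$ (resp.\ $+12$) absorbs, so there is nothing to reconcile beyond this accounting. Your phrase ``reuse the unused room in these parent colors'' also overstates what is needed: only the \emph{last} parent color remains open and is reused; all earlier parent colors are closed for good. Second, your closing remark about ``dividing by $m$ with $t=o(m)$'' imports the parameter $t$ from the online theorem; it has no role in this offline tree result, whose bounds $2m$ and $1.693m+12$ are absolute, not asymptotic ratios.
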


\begin{proof}
Let $r$ be the root vertex (assign any of the vertex as root vertex randomly if $r$ not specified). Below we present the algorithms achieving the above upper bounds along with their analysis. The algorithm starts from the root vertex and visits all the nodes in a BFS manner. When visiting a node, it colors the edges incident to it.

First let us consider when we use \emph{NEXT-FIT}. Let there be a palette $P$ having $2m$ colors numbered from 1 to $2m$. Whenever a new color is required from $P$, we chose the least numbered color which does not create any violations. We claim that the size of $P$ is sufficient to properly color all the edges. When visiting a node $u$. Let the $k$ child nodes of $u$ be $v_1,v_2,\dots, v_k$ and let vertex $w$ be the parent node of $u$ if $u$ is not the root vertex. As we are traversing in a BFS manner, none of the edges between $u$ and its child node $v_i$ will be colored (they have to be colored in this step) whereas the edges between $u$ and parent node $w$ must have already been colored when $w$ was visited.

We will color the edges between $u$ and its child nodes in a specific order in which we consider one child node at a time and color the edges between them. First color the edges between $u$ and $v_1$ in a \emph{NEXT-FIT} manner. This means that at any moment there is an active/open color between $u$ and $v_1$ and present the edges between them in an online fashion (though we have all the edges beforehand) and color them using \emph{NEXT-FIT}. As the edges between $u$ and $w$ have already been colored, let the colors used for them be numbered from $c_1$ to $c_2$. Note that it would a continuous range of colors due to the way we are coloring the edges each child node at a time, and the way of choosing the new color when required from palette $P$. Keep $c_2$ as the initial open color between $u$ and $v_1$ and carry on the procedure as explained above. When all the edges between $u$ and $v_1$ have been colored, let the colors used for this be numbered from $c_2$ to $c_3$. Now move on to the next child node $v_2$ with $c_3$ as the initial open color between them. Carry on the this procedure for all the child nodes of $u$. 

We claim that if the above algorithm is followed, then the number of colors used by $u$ will be atmost $2m$ which are taken from palette $P$ without creating any violations. This can be seen with the help of Lemma \ref{proof next fit}. Lemma \ref{proof next fit} claims that when items are presented in an online manner and we pack them using \emph{NEXT-FIT}, then atmost $2m-1$ bins are required. It is an easy observation that we are doing the same thing here when visiting vertex $u$. The items are the edges and the bins are the colors. As $w$ is the parent node of $u$, $w$ must have been visited earlier than $u$. At that step when considering the edges between $w$ and $u$, following the above algorithm, these edges are presented in an online manner and colored using \emph{NEXT-FIT}. So, when we visit $u$ in the current step, some of its edges have already been colored using \emph{NEXT-FIT}, and now we color the remaining incident edges again using \emph{NEXT-FIT}. Thus, the overall effect with respect to $u$ is that all its incident edges are colored in a \emph{NEXT-FIT} manner thus taking atmost $2m-1+1$ colors. The $+1$ term is there because the first color ($c_1$ in the above example) might not be fully utilized with respect to $u$.  This holds true for any node $u$ visited during BFS traversal. And thus by induction, all the edges of $G$ can be colored properly using only the colors from palette $P$.

Finally for the case when we use $HARMONIC_M$ (with $M=12$), the algorithm and analysis is very similar to the above. The only differences are that now we follow $HARMONIC_M$ packing, by taking into consideration the $M$ types of items/edges. So, at any moment there would be $M=12$ open bins instead of just 1, each corresponding to a unique type. From Lemma \ref{proof harmonic}, the upper bound guarantee is $1.693m$. So, the palette size of $1.693m+12$ will be sufficient for properly coloring all the edges without any violations. Below we present both the algorithms in a concise form.

\begin{algorithm}
\caption{Algorithm for proper coloring of multi-graph trees using \emph{NEXT-FIT}}
\begin{algorithmic}
\STATE Let $P$ be a palette/set of $2m$ colors. Graph $G$ will use colors only from the palette $P$. Also, let $Q$ be a FIFO queue. It is empty initially. Insert root vertex $r$ into $Q$. Now repeat the below step until $Q$ is empty.
\end{algorithmic}

\begin{algorithmic}[1]
\STATE Dequeue a vertex from $Q$. Let it be $u$. If $u=r$, then color the edges between $r$ and its child nodes taking one child node at a time as explained above. Take color number 1 as the first open color between $r$ and the first child node.
\STATE Else if $u\neq r$, then let vertex $w$ be the parent of $u$. Let the colors used for coloring the edges between $w$ and $u$ be from $c_1$ to $c_2$. Now color the edges between $u$ and its child nodes taking one child node at a time as explained before. Take $c_2$ as the initial open color between $u$ and the first child node $v_1$.
\STATE Insert into $Q$ all the child nodes of $u$ if any.

\end{algorithmic}
\begin{algorithmic}
\STATE If the given graph has multiple connected components, do the above procedure for each connected component.  
\end{algorithmic}
\end{algorithm}

\begin{algorithm}
\caption{Algorithm for proper coloring of multi-graph trees using \emph{HARMONIC$_M$} with $M=12$}
\begin{algorithmic}
\STATE Let $P$ be a palette/set of $1.693m+12$ colors. Graph $G$ will use colors only from the palette $P$. Also, let $Q$ be a FIFO queue. It is empty initially. Insert root vertex $r$ into $Q$. Now repeat the below step until $Q$ is empty.
\end{algorithmic}
\begin{algorithmic}[1]
\STATE Dequeue a vertex from $Q$. Let it be $u$. If $u=r$, then color the edges between $r$ and its child nodes taking one child node at a time as explained above. Take color number 1 to 12 as the first open colors for each type between $r$ and the first child node.
\STATE Else if $u\neq r$, then let vertex $w$ be the parent of $u$. Let the open colors for each type be $c_1,c_2,\dots,c_{12}$ at the moment when all the edges between $u$ and $w$ were finished assigning colors. Use these same 12 colors as the initial open colors of each type between $u$ and the first child node $v_1$. Carry on the coloring using $HARMONIC_M$ as explained before. 
\STATE Insert into $Q$ all the child nodes of $u$ if any.

\end{algorithmic}
\begin{algorithmic}
\STATE If the given graph has multiple connected components, do the above procedure for each connected component.  
\end{algorithmic}
\end{algorithm}
\end{proof}
 
 \section{Conclusion}
 \begin{itemize}
     \item In all the algorithms above, we have used either \emph{NEXT-FIT} or $HARMONIC_M$. The only differences between them is that there is a single open bin/color in case of \emph{NEXT-FIT} whereas there are 12 open bins/colors in case of $HARMONIC_M$ each for each type of items. Due to this distinction, the upper bound guarantees are different as we saw in Lemma \ref{proof next fit} and Lemma \ref{proof harmonic}. Because of this the algorithms and their analysis when using both are very similar. We just need to take into consideration the type of items when using $HARMONIC_M$ and we replace the upper bound $2m-1$ with $1.693m$ in the analysis along with replacing $1$ open color with $M=12$ open colors at any moment.
     
     \item Note that the \emph{weighted edge coloring problem} with its condition of proper coloring is a kind of global constraint as for a connected component, assigning a color to an edge might depend on or might affect the colors assigned to edges incident to some distant vertex. But in the above algorithms, we just considered any arbitrary step, and we show that there won't be any violations in that step for which we only have to consider the local neighbourhood. Then as this is true for any step, there is no violations occurring in the whole algorithm by induction.
     
     \item In Section 3, we presented better upper bounds for \emph{offline weighted edge coloring} using \emph{online bin packing algorithms} like \emph{NEXT-FIT} and $HARMONIC_M$. Thus, though its an offline problem, still online heuristics can help to come up with better bounds.
 \end{itemize}

\section{Acknowledgment}
The author wants to thank Prof. Arindam Khan, IISc Bangalore, for his continuous support, guidance and regular feedback.
\bibliography{lipics-v2019-sample-article}

\end{document}